\documentclass[lettersize,twocolumn]{article}

\usepackage[margin=2cm]{geometry}
\usepackage{cite}
\usepackage{graphicx}
\graphicspath{{../pdf/}{../jpeg/}}
\usepackage{amsmath, amssymb,amsfonts,amsthm}
\theoremstyle{definition}

\newtheorem{theorem}{Theorem}

\usepackage{algorithmic}
\usepackage{array}

\usepackage{physics}
\usepackage{tikz}
\usepackage{amsmath, amssymb, amsthm}
\usepackage{quantikz}
\usepackage{xcolor}
\usepackage{quantikz}
\usepackage{todonotes}
\everymath{\displaystyle}
\usepackage{graphicx} % Required for inserting images
\usepackage{caption}
\usepackage{subcaption}
\usepackage{authblk}
\usepackage{url} 
\usepackage{hyperref}
\usepackage{orcidlink}
\title{Exploiting all ancilla outcomes in linear combinations of unitaries: low-rank recovery and quantum trapdoor functions}

\author{Ammar Daskin \orcidlink{0000-0002-1497-5031}}
\affil{
Department of Computer Engineering\\
Istanbul Medeniyet University,
Istanbul, Turkiye, 34000\\
email: adaskin25@gmail.com
}
\date{}
\begin{document}

\twocolumn[
  \begin{@twocolumnfalse}
  \maketitle
\begin{abstract}
The linear combination of unitaries (LCU) is a fundamental quantum algorithm primitive that embeds non‑unitary operators via post‑selection on an ancilla register.  In standard LCU, only the $\ket{0\cdots0}$ ancilla outcome is retained; the remaining “junk” outcomes are discarded.
We study these discarded parts by introducing an alternative LCU circuit which simplifies the coefficient‑preparation unitary with Hadamard gates and a single rotation qubit.
Every computational basis measurement of the ancilla projects the system onto a different linear combination of the target unitaries. Collecting these outcome states and reshaping them into a $2K\times N$ matrix reveals a factorization $\Phi = C X$, where $C$ encodes the coefficients and $X$ contains the action of each unitary on the input; this immediately shows $\operatorname{rank}(\Phi)\le K$. This structure enables two complementary applications:
(i) classical low‑rank matrix completion can reconstruct the full output (including the target) from a fraction of its entries, turning every shot into useful information;
(ii) treating $C$ as a secret key hides the input state, leading to a candidate quantum trapdoor function and symmetric encryption.
The scheme thus turns the “junk” ancilla outcomes into a structured resource, possibly opening paths for further applications.
\end{abstract}
  \end{@twocolumnfalse}
  ]
  
\section{Introduction}

Linear combinations of unitaries (LCU) are a fundamental technique in quantum algorithm design~\cite{childs2012hamiltonian}, because any matrix can be expressed as a sum of unitary operators. LCU can be implemented with a single ancilla register or via more flexible circuit ansätze~\cite{chakraborty2024implementing}, or as a programmable quantum circuits~\cite{daskin2012universal}. It has been used to implement Taylor‑series expansions of matrix functions~\cite{berry2015simulating} and structured matrices \cite{ollive2026hamiltonian} as well as linear combinations of permutation matrices, combined with Birkhoff–von Neumann decompositions and Sinkhorn’s algorithm~\cite{daskin2024quantum}.

Given \(K\) unitary operators \(U_1,\dots,U_K\) and complex coefficients \(\alpha_1,\dots,\alpha_K\), the goal is to apply the non‑unitary operator
\begin{equation}
T = \sum_{t=1}^K \alpha_t U_t
\label{eq:defT}
\end{equation}
to a quantum state \(\ket{\psi}\).
The standard LCU method~\cite{childs2012hamiltonian,low2019hamiltonian} uses an ancilla register of \(\lceil\log_2 K\rceil\) qubits and prepares the state
\begin{equation}
B\ket{0}_{\text{anc}} = \frac{1}{\sqrt{s}} \sum_{t=1}^K \sqrt{\alpha_t}\,\ket{t}_{\text{anc}},
\qquad s = \sum_{t=1}^K |\alpha_t|,
\end{equation}
followed by the controlled application of the \(U_t\) via the select operator
\(\text{Select}(U) = \sum_t \ket{t}\!\bra{t} \otimes U_t\),
and finally uncomputes the ancilla with \(B^\dagger\).
Post‑selecting on the ancilla returning to \(\ket{0}\) leaves the system in the unnormalized state \(\frac{1}{s}T\ket{\psi}\), with success probability
\begin{equation}
p_{\text{succ}}^{\text{(std)}} = \frac{\|T\ket{\psi}\|^2}{s^2}
= \frac{\|T\ket{\psi}\|^2}{\bigl(\sum_t|\alpha_t|\bigr)^2}.
\label{eq:stdprob}
\end{equation}

\paragraph{Motivation.}
In the standard LCU procedure~\cite{childs2012hamiltonian}, the ancilla register is returned to the \(\ket{0\cdots0}\) state by a unitary that encodes the coefficients; any other measurement outcome is treated as a failure and the corresponding system state is discarded.
While qubitization~\cite{low2019hamiltonian} and quantum singular value transformation (QSVT)~\cite{gilyen2019quantum} elegantly incorporate the entire block-encoding-including the “junk” subspace-into higher‑level algorithms by exploiting the full unitary, those discarded branches are seldom regarded as carriers of useful information in their own right.
However, in many LCU constructions the non-zero ancilla outcomes are mathematically related to the target operation (for instance, they correspond to the same linear combination with altered signs or coefficients) and therefore contain exploitable structure.
This raises a natural question: can we design an LCU circuit whose junk outcomes are simple enough to be collectively processed, turning what is normally wasted into a resource?

\paragraph{Contribution.}
To make the study of the ``junk" parts in the output state, we propose an alternative LCU circuit that replaces the fine‑tuned coefficient‑preparation unitary with Hadamard gates on an index register and a single rotation qubit.
The resulting unitary still block‑encodes the target operator \(T=\sum_{t=1}^K \alpha_t U_t\) (with \(|\alpha_t|\le 1\)), however, now every computational‑basis outcome of the index and rotation registers yields a distinct, unnormalized system state.
Collecting these states and reshaping them into a \(2K\times N\) matrix \(\Phi\) (where \(N=2^n\) for \(n\) system qubits) reveals the factorization \(\Phi = C X\).
Here, \(C\) is a known \(2K\times K\) coefficient matrix determined directly by the weights and the Hadamard signs, and the rows of \(X\) are the individual states \(U_t\ket{\psi}\).
The key algebraic observation is that \(\Phi\) has rank at most \(K\), despite its dimensions being much larger than the final target state. Consequently:
\begin{itemize}
    \item We show that the complete \(2K\times N\) matrix \(\Phi\) can be recovered from a fraction of its entries using classical low‑rank matrix completion~\cite{nguyen2019low}. Because every measurement shot populates one entry of this larger matrix, the full output—including all “junk” branches—is estimated simultaneously, capturing the linear combination with all possible sign patterns.
\item We describe an a priori concept of information hiding in which the coefficient matrix \(C\) acts as a secret key: without knowledge of the coefficients, an observer cannot uniquely decompose \(\Phi\) to recover \(X\), thereby hiding the input state. This gives rise to a candidate quantum trapdoor function, where the mapping \(\ket{\psi}\mapsto\) (classical measurement outcomes) is easy to evaluate but hard to invert without the key. Moreover, the overall circuit satisfies \(V^2 \propto \bigoplus_t U_t^2\) (the coefficients cancel), enabling symmetric encryption where the same operation both encrypts and decrypts.
\end{itemize}

In both applications, the “junk’’ ancilla outcomes are not discarded; instead they form the essential algebraic structure of the enlarged output matrix, enabling shot‑efficient reconstruction and information hiding within a single circuit design.
In the following sections we detail the circuit, its algebraic properties, and the two application avenues.

\section{The circuit and its output}
\label{sec:circuit}

\begin{figure*}[htpb]
\centering
\begin{quantikz}
  % Index ancilla 0 (least significant)
\lstick{$\ket{0}_{idx0}$}
  & \gate{H} & \octrl{1}\gategroup[4,steps=4,style={dashed, rounded corners,fill=blue!10, inner xsep=2pt},background,label style={label position=below,anchor=north,yshift=-0.3cm}]{Select: $R_t \otimes U_t$} & \octrl{1} & \ctrl{1} &\ctrl{1} &\gate{H} & \meter{} \\
  % Index ancilla 1 (most significant)
  \lstick{$\ket{0}_{idx1}$}
  & \gate{H} & \octrl{1} &\ctrl{1}&\octrl{1}&\ctrl{1}& \gate{H} & \meter{} \\
  % Rotation qubit
 \lstick{$\ket{0}_{rot}$}
  & \qw & 
\gate[2,style={fill=green!15}]{R_0\otimes U_0}&
\gate[2,style={fill=green!15}]{R_1\otimes U_1}&
\gate[2,style={fill=green!15}]{R_2\otimes U_2}&
\gate[2,style={fill=green!15}]{R_3\otimes U_3}&\qw& \\
  % System register (n qubits)
  \lstick{$\ket{\psi}_{sys}$} 
  & \qwbundle{n} \qw & \qw & \qw & \ &  & \qw & \qw
\end{quantikz}
\caption{Alternative circuit for LCU with $K=4$ terms. The index register is placed in an equal superposition by Hadamard gates, then controls the application of $R_t\otimes U_t$ on the rotation and system registers. After the controlled operations, a second set of Hadamard gates mixes the index states. Measurement of all ancilla qubits yields a collection of outcomes, each carrying a linear combination of the unitaries.}
\label{fig:circuit}
\end{figure*}

We consider the task of implementing the following linear combination inside of a larger unitary circuit:
\begin{equation}
T = \sum_{t=1}^{K} \alpha_t U_t , \qquad \alpha_t\in\mathbb{R},
\label{eq:defT}
\end{equation}
where each $U_t$ is an $n$-qubit unitary and the coefficients are real and scaled so that $|\alpha_t|\le 1$ (note that complex coefficients can be handled by absorbing phases into the $U_t$ or by using two rotation qubits in our circuit design). While the standard LCU circuit where the coefficients implemented through structured or specific matrices such as Householder transformations result in structured special matrices, to make the analysis and circuit design easier, here we employ a circuit construction which can be considered generalization of circuit design approaches  used to do matrix arithmetic between unitary matrices by using Hadamard gates. This type of constructions and many of known block encoding ideas where the structure of Hadamard mixing used along with rotation gates (e.g. see \cite{chakraborty2024implementing,daskin2012universal} for cascading construction of unitary blocks) makes the design of the overall circuit easier: As we will show here after a simple qubit reordering, the resulting unitary $U$ which represents the circuit in Fig.\ref{fig:circuit} acquires a $2\times2$ block structure
$\begin{pmatrix} A & B \\ -B & A \end{pmatrix}$ (or a reflection variant),
where the diagonal blocks of $A$ are identical and equal to
$T = \sum_{t=1}^{K} \alpha_t U_t/K$,
and those of $B$ equal $T = \sum_{t=1}^{K} \sqrt{1-\alpha_t^2} U_t/K$.

This circuit design (depicted in Fig.\ref{fig:circuit} for K=4) consists of three registers:
\begin{enumerate}
    \item A register of $\lceil\log_2 K\rceil$ index qubits initialised with Hadamard gates;
    \item A single rotation qubit that carries two‑channel rotation gates, each controlled on the index;
    \item The target unitaries $U_t$, controlled on the index, as in standard LCU.
\end{enumerate}
Then the following sequence of operations is applied to thes registers:
\begin{itemize}
\item Hadamard gates on all index qubits, creating an equal superposition $|+\rangle_{\!I} = \frac{1}{\sqrt{K}}\sum_{t=1}^{K} |t\rangle$ (we label the computational basis states $|t\rangle$ with $t=1,\dots,K$).
\item A controlled block which, conditionally on the index being $|t\rangle$, applies the two-register gate $R_t\otimes U_t$ to the rotation qubit and the system. Here, for each term $t$ we define a single‑qubit rotation
\begin{equation}
R_t = \begin{pmatrix} w_t & r_t \\[2pt] r_t & -w_t \end{pmatrix},
\qquad w_t^2+r_t^2=1,
\label{eq:Rt}
\end{equation}
which we call the \emph{reflection variant}. (The cyclic variant where negative sign on an off-diagonal entry leads to a similar structure up to a sign difference; all other considerations apply analogously.)  
Here, the rotation gate angles are set to values so that the top–left entry of $R_t$ equals the desired coefficient. That means, 
\begin{equation}
w_t = \alpha_t,\qquad r_t = \sqrt{1-\alpha_t^2}.
\label{eq:walpha}
\end{equation}

\item A second round of Hadamard gates on the index register.
\item Measurement of the index and rotation qubits in the computational basis.
\end{itemize}

\subsection{The application of operations and output state}
For an arbitrary input, $\ket{\psi}$, the initial state is $|0\rangle_{\!I}^{\otimes m}|0\rangle_{\!R}|\psi\rangle_{\!S}$.   After the first Hadamard layer the index register becomes $|+\rangle_{\!I}$, and the whole state is
\begin{equation}
\frac{1}{\sqrt{K}} \sum_{t=1}^{K} |t\rangle_{\!I} \, |0\rangle_{\!R} |\psi\rangle_{\!S}.
\end{equation}

We can describe the following controlled application of the $R_t\otimes U_t$ gates by a block diagonal operator:
\begin{equation}
M = \bigoplus_{t=1}^{K} \bigl( R_t \otimes U_t \bigr),
\label{eq:M}
\end{equation}
which acts on the rotation and system registers only when the index is $|t\rangle$. 
Applying $M$ yields
\begin{equation}
\frac{1}{\sqrt{K}} \sum_{t=1}^{K} |t\rangle_{\!I} \; \bigl(R_t\otimes U_t\bigr) |0\rangle_{\!R}|\psi\rangle_{\!S}.
\end{equation}
The final pair of Hadamard gates on the index register, described by the normalized Hadamard matrix $H$ with entries $H_{it}\in\{+1/\sqrt{K},-1/\sqrt{K}\}$, transforms this state into the following final state:
\begin{equation}
|\Psi_{\text{out}}\rangle = 
\sum_{i=1}^{K} |i\rangle_{\!I} \;
\Biggl[ \frac{1}{\sqrt{K}} \sum_{t=1}^{K} H_{it} \bigl(R_t\otimes U_t\bigr) |0\rangle_{\!R}|\psi\rangle_{\!S} \Biggr].
\label{eq:PsiOut}
\end{equation}

Now we measure the index and rotation qubits.  Conditioned on obtaining the outcome $i$ for the index and $r\in\{0,1\}$ for the rotation, the (unnormalized) system state becomes
\begin{equation}
|\phi_{i,r}\rangle = \frac{1}{\sqrt{K}} \sum_{t=1}^{K} H_{it} \, \langle r| R_t |0\rangle \, U_t |\psi\rangle.
\label{eq:phii0}
\end{equation}
Using $\langle 0|R_t|0\rangle = w_t$ and $\langle 1|R_t|0\rangle = r_t$, and introducing the sign
\begin{equation}
s_{i,t} = \sqrt{K}\, H_{it} \;\in\; \{+1,-1\},
\label{eq:sign}
\end{equation}
we obtain the compact expressions
\begin{align}
|\phi_{i,0}\rangle &= \frac{1}{K} \sum_{t=1}^{K} s_{i,t}\, w_t\, U_t |\psi\rangle, \label{eq:phi_i0}\\
|\phi_{i,1}\rangle &= \frac{1}{K} \sum_{t=1}^{K} s_{i,t}\, r_t\, U_t |\psi\rangle. \label{eq:phi_i1}
\end{align}
For the all‑plus index outcome $i=0$ we have $s_{0,t}=+1$ for all $t$, so we can write
\begin{equation}
|\phi_{0,0}\rangle = \frac{1}{K} D_w |\psi\rangle,\qquad
|\phi_{0,1}\rangle = \frac{1}{K} D_r |\psi\rangle,
\label{eq:phi00}
\end{equation}
where $D_w := \sum_t w_t U_t = T$ and $D_r := \sum_t r_t U_t$. Here also note that $r_t=\sqrt{1-w_t^2}$.  From the above equations, we can describe the probability of a specific outcome $(i,r)$ as:
\begin{equation}
p_{i,r} = \bigl\| |\phi_{i,r}\rangle \bigr\|^2.
\end{equation}

Because the rotation gate only allows $|w_t|\le 1$, in the case we have large or all coefficients close to zero, we can globally scale the desired coefficients:
\begin{equation}
c = \max_{t} |\alpha_t|,\qquad \beta_t = \frac{\alpha_t}{c},
\end{equation}
and set $w_t = \beta_t$.  Then $D_w = \sum_t \beta_t U_t = T/c$, and the outcome $(0,0)$ gives the unnormalized state
\begin{equation}
|\phi_{0,0}\rangle = \frac{1}{Kc}\, T|\psi\rangle.
\label{eq:scaled_T}
\end{equation}
Therefore, we can describe its success probability from the norm of this state (the chance of measuring only that specific outcome) as:
\begin{equation}
p_{0,0} = \frac{\|T|\psi\rangle\|^2}{c^{2} K^{2}}.
\label{eq:p00}
\end{equation}
This success probability is smaller than the standard LCU value $\|T\ket{\psi}\|^2/(\sum_t|\alpha_t|)^2$ unless the coefficients are all $O(1)$.

The circuit also generates the combination \(\frac{1}{K}\sum_t r_t U_t\) when the rotation qubit is measured in \(\ket{1}\).  
One may choose to post‑select only on the index register being \(\ket{0}\), irrespective of the rotation outcome; this yields an ensemble containing both \(D_w\ket{\psi}\) and \(D_r\ket{\psi}\), and the total probability of obtaining index \(\ket{0}\) becomes  $p_{0,0}+p_{0,1}\approx 1/K$. These probabilities along with the probabilities of the standard LCU are compared in Fig.\ref{fig:probability-comparison}. The figure compares the success probabilities of standard LCU and our alternative circuit for $K=4$, $N=16$, and a family of coefficient vectors where half the $\alpha_t$ are fixed to $1$ and the other half are swept from $0.1$ to $1$.
The curves labelled ``With rotation qubit'' correspond to the probability $p_{0,0}$ of obtaining the single outcome $(i=0,r=0)$, i.e. the strict equivalent of standard post‑selection.
``Without rotation qubit'' gives the probability $p_{0,0}+p_{0,1}$ of measuring the index register in $\ket{0}$ irrespective of the rotation qubit.
Analytic predictions (dashed) align perfectly with the simulated data, confirming the formulas derived in the text.

\begin{figure}[htbp]
    \centering
    \includegraphics[width=\linewidth]{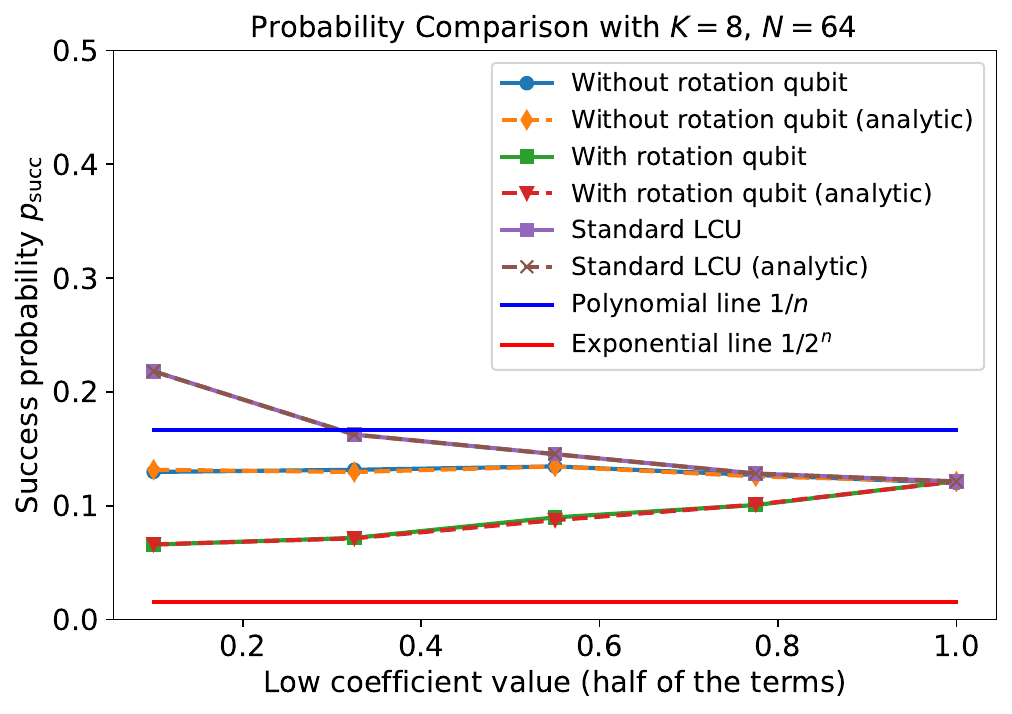}
    \caption{
    Success probabilities for \(K=4\), \(N=16\) with half the coefficients fixed to \(1\) and the other half varying from \(0.1\) to \(1\).
    The solid lines show simulation results; dashed lines are the analytic predictions.
    ``With rotation qubit" corresponds to the strict post‑selection on \(i=0,r=0\) (probability \(p_{0,0}\)).
    ``Without rotation qubit" sums both rotation outcomes for index \(i=0\) (probability \(p_{0,0}+p_{0,1}\)).
    Standard LCU post‑selection is shown for reference.
    }
    \label{fig:probability-comparison}
\end{figure}
\section{Unitary matrix representation of the circuit and its properties}
\label{sec:matrix_repr}

Although we will work directly with the output states~\eqref{eq:phi_i0}--\eqref{eq:phi_i1}, it is illuminating to view the full circuit as a block‑encoding matrix. We can describe the whole circuit operation before measurement as the following unitary product:
\begin{equation}
V \;=\; (H\otimes I_2\otimes I_n)\;M\;(H\otimes I_2\otimes I_n),
\label{eq:Vdef}
\end{equation}
where $H$ is the normalized $K\times K$ Hadamard matrix and
\begin{equation}
M \;=\; \bigoplus_{t=1}^{K}\,\bigl(R_t\otimes U_t\bigr)
\label{eq:Mdef}
\end{equation}
is block‑diagonal in the index register.

Let $\ket{i}$ ($i=1,\dots,K$) denote the computational basis of the index register.
The $(i,j)$ block of $V$ acting on the rotation$\otimes$system subspace is the $(2N)\times(2N)$ matrix
\begin{equation}
\begin{split}
V_{ij} & \;=\; \sum_{t=1}^{K} H_{it}H_{jt}\;(R_t\otimes U_t)\\[2pt]
& \;=\; 
\begin{pmatrix}
\sum_{t} H_{it}H_{jt}\,w_tU_t &
\sum_{t} H_{it}H_{jt}\,r_tU_t \\[14pt]
\sum_{t} H_{it}H_{jt}\,r_tU_t &
-\sum_{t} H_{it}H_{jt}\,w_tU_t
\end{pmatrix}.
\end{split}
\label{eq:Vij}
\end{equation}
For $i=j$, using $H_{it}^2=1/K$, every diagonal block $V_{ii}$ is identical and encodes the desired linear combinations:
\begin{equation}
V_{ii} \;=\; \frac{1}{K}
\begin{pmatrix}
\sum_t w_tU_t & \sum_t r_tU_t \\[8pt]
\sum_t r_tU_t & -\sum_t w_tU_t
\end{pmatrix}
\;=\; \frac{1}{K}
\begin{pmatrix}
D_w & D_r \\[2pt]
D_r & -D_w
\end{pmatrix}.
\label{eq:Vdiag}
\end{equation}
For $i\neq j$, the products $H_{it}H_{jt}$ equal $\pm1/K$, yielding
\begin{equation}
V_{ij} \;=\;
\begin{pmatrix}
\displaystyle\sum_{t} H_{it}H_{jt}\,w_tU_t &
\displaystyle\sum_{t} H_{it}H_{jt}\,r_tU_t \\[14pt]
\displaystyle\sum_{t} H_{it}H_{jt}\,r_tU_t &
-\displaystyle\sum_{t} H_{it}H_{jt}\,w_tU_t
\end{pmatrix}.
\label{eq:Voffdiag}
\end{equation}

\subsubsection{Shuffling to the global $2\times2$ block form}

Reordering the basis from $|\text{index}\rangle\otimes|\text{rotation}\rangle\otimes|\text{system}\rangle$ to $|\text{rotation}\rangle\otimes|\text{index}\rangle\otimes|\text{system}\rangle$ (via a permutation $S$) exposes the $2\times2$ structure.
Let $S$ be the $(2nK)\times(2nK)$ permutation matrix (a sequence of swap operations) which performs this reordering so that the final unitary becomes:
\begin{equation}
U \;=\; S\;V\;S^{\mathsf T}
\;=\; \begin{pmatrix}
A & B \\[2pt]
B & -A
\end{pmatrix},
\label{eq:Ushuffled}
\end{equation}
This representation is possible because $A$ and $B$ are each $(nK)\times(nK)$ matrices built from $K^2$ blocks of
size $n\times n$:
\begin{equation}
\begin{split}
&A_{ij} \;=\; \sum_{t=0}^{K-1} H_{it}H_{jt}\;w_tU_t,\\
&B_{ij} \;=\; \sum_{t=0}^{K-1} H_{it}H_{jt}\;r_tU_t.
\end{split}
\label{eq:ABblocks}
\end{equation}
Therefore, this matrix is just column and row permuted version of the original circuit (the registers are swapped).
This representation reveals the diagonal sub‑blocks ($i=j$) which are the desired parts of the circuit:
\begin{equation}
\begin{split}
&A_{ii} \;=\; \frac{1}{K}\,D_w \;=\; \frac{1}{K}\sum_{t=0}^{K-1} w_tU_t,\\
&B_{ii} \;=\; \frac{1}{K}\,D_r \;=\; \frac{1}{K}\sum_{t=0}^{K-1} r_tU_t.
\end{split}
\label{eq:ABdiag}
\end{equation}
All diagonal blocks are identical; each \(A_{ii}\) encodes \(D_w/K\) and each \(B_{ii}\) encodes \(D_r/K\).
Consequently, unlike standard LCU where one must post‑select on the ancilla being \(\ket{0\cdots0}\), here every computational basis state of the index register—paired with the rotation qubit in \(\ket{0}\) (resp. \(\ket{1}\))—projects onto the same linear combination \(D_w/K\) (resp. \(D_r/K\)). The output states~\eqref{eq:phi_i0} and \eqref{eq:phi_i1} are exactly these operators applied to \(\ket{\psi}\) and weighted by the Hadamard sign pattern.

Note that many combinations of the linear combinations are possible by using the matrix entries. For instance, one can see that powers of this matrix has resemblance to the descriptions in qubitization circuits \cite{low2019hamiltonian} where the iterated dynamics can be expressed through Chebyshev polynomials of \(A\) and \(B\).

\subsection{Eigenvalues and singular values from the explicit block structure}
\label{sec:CSD}

The shuffled unitary~\eqref{eq:Ushuffled},
has a block structure that is completely determined by the circuit parameters. We can write
\begin{equation}
\begin{split}
A  & = \sum_{i,j} |i\rangle\langle j| \otimes \Bigl(\sum_t H_{it}H_{jt}\, w_t U_t\Bigr)\\
    &= (H \otimes I_n)\;
      \begin{pmatrix}
        w_1 U_1 & & \\
        & \ddots & \\
        & & w_K U_K
      \end{pmatrix}
      \;(H^{\! \top} \!\otimes I_n)
      \\
&\;=\; Q \, W \, Q^{\dagger},
\end{split}
\label{eq:A_similarity}
\end{equation}
where \(Q = H \otimes I_n\) is unitary (the normalized Hadamard satisfies \(H^{\!\top}H = I_K\)) and
\begin{equation}
W = \operatorname{diag}\bigl(w_1 U_1,\; w_2 U_2,\; \dots,\; w_K U_K\bigr).
\label{eq:Wdef}
\end{equation}
Thus \(A\) is unitarily similar to a block‑diagonal matrix whose \(t\)-th block is \(w_t U_t\).  This gives the eigenvalues of \(A\) as the union of the eigenvalues of \(w_t U_t\) for \(t=1,\dots,K\).  Because each \(U_t\) is unitary, its eigenvalues lie on the unit circle; consequently the eigenvalues of \(A\) lie on circles of radius \(|w_t|\) and its norm is \(\|A\| = \max_t |w_t|\).

An identical argument can be applied to \(B\) to obtain
\begin{equation}
B = Q \, \operatorname{diag}(r_1 U_1,\dots,r_K U_K) \, Q^{\dagger}.
\end{equation}
And so similarly the eigenvalues of \(B\) lie on circles of radius \(|r_t| = \sqrt{1-|w_t|^2}\).

Similarly, because \(w_t U_t\) has all singular values equal to \(|w_t|\) (with multiplicity \(n\)), the multiset of singular values of \(A\) consists of \(K n\) numbers, namely
\begin{equation}
\underbrace{|w_1|,\dots,|w_1|}_{n},\; \underbrace{|w_2|,\dots,|w_2|}_{n},\; \dots,\; \underbrace{|w_K|,\dots,|w_K|}_{n}.
\label{eq:svdA}
\end{equation}
The same holds for \(B\) with \(|r_t| = \sqrt{1-|w_t|^2}\).

The unitarity of \(U\) along with above singular value decompositon of $A$ and $B$ properties imposes the relations to write the cosine‑sine decomposition (CSD) of matrix $U$ (Note that CSD of a unitary can be used to design quantum circuits \cite{mottonen2004quantum,nakajima2009synthesis,chen2013qcompiler}) : We write each unitary in its polar decomposition \(U_t = P_t Q_t^{\dagger}\) with \(P_t, Q_t\) unitaries; then
\begin{equation}
w_t U_t = P_t\, (w_t I_n)\, Q_t^{\dagger}, \qquad
r_t U_t = P_t\, (r_t I_n)\, Q_t^{\dagger}.
\end{equation}
Define the block‑diagonal matrices
\begin{equation}
\mathcal{P} = \operatorname{diag}(P_1,\dots,P_K), \qquad
\mathcal{Q} = \operatorname{diag}(Q_1,\dots,Q_K),
\end{equation}
and let \(\Sigma_w = \operatorname{diag}(w_1 I_n,\dots,w_K I_n)\), \(\Sigma_r = \operatorname{diag}(r_1 I_n,\dots,r_K I_n)\).  Then
\begin{equation}
W = \mathcal{P} \, \Sigma_w \, \mathcal{Q}^{\dagger}, \qquad
\operatorname{diag}(r_t U_t) = \mathcal{P} \, \Sigma_r \, \mathcal{Q}^{\dagger}.
\end{equation}
Substituting into the expressions for \(A\) and \(B\) yields the standard form
\begin{equation}
A = Q_1 \, \Sigma_w \, Q_2^{\dagger}, \qquad
B = Q_1 \, \Sigma_r \, Q_2^{\dagger},
\label{eq:explicitCSD}
\end{equation}
with the unitary matrices
\begin{equation}
Q_1 = (H \otimes I_n) \, \mathcal{P}, \qquad
Q_2 = (H \otimes I_n) \, \mathcal{Q}.
\label{eq:Q1Q2}
\end{equation}
The diagonal entries of \(\Sigma_w\) and \(\Sigma_r\) are the singular values of \(A\) and \(B\); they automatically satisfy \(\sigma_{w,j}^2 + \sigma_{r,j}^2 = 1\) because \(w_t^2 + r_t^2 = 1\) for every term.

Inserting the CSD~\eqref{eq:explicitCSD} into \(U\) gives
\[
U = (Q_1 \otimes I_2)\,
\begin{pmatrix}
\Sigma_w & \Sigma_r \\[2pt]
\Sigma_r & -\Sigma_w
\end{pmatrix}\,
(Q_2 \otimes I_2)^{\dagger}.
\]
The central matrix is permutation‑similar to a direct sum of \(2\times2\) blocks
\begin{equation}
M_j = \begin{pmatrix}
\sigma_{w,j} & \sigma_{r,j} \\[2pt]
\sigma_{r,j} & -\sigma_{w,j}
\end{pmatrix}, \qquad j = 1,\dots,nK.
\label{eq:Mj}
\end{equation}
Each \(M_j\) satisfies \(\det(M_j) = -(\sigma_{w,j}^{2}+\sigma_{r,j}^{2}) = -1\) and \(\operatorname{tr}(M_j)=0\); therefore its eigenvalues are \(\pm 1\).  Consequently, the full unitary \(U\) has only eigenvalues \(+1\) and \(-1\), each with multiplicity \(nK\).  This explains why in the even powers of $U$, the coefficients disappear (\(U^2 = Q\ diag(U_1, \dots, U_K)\ Q^{\dagger}\)) up to the block structure; effectively, \(U\) is a block‑encoded reflection matrix.

\subsubsection{Connection to the success probability}
Since the diagonal blocks of \(U\) that contain the desired linear combination correspond to the subspace where the rotation qubit is \(|0\rangle\) and the index register is in a particular computational basis state; their collective weight is governed by the singular values of \(A\), which are exactly the scaled coefficients \(|w_t|\).  Therefore, when all \(|w_t|\) are small, the singular values of \(A\) are concentrated near \(0\) (and those of \(B\) near \(1\)), so the probability of the ``good'' ancilla outcome is tiny.  

\section{Low‑rank structure and recovery of the full output state}
\label{sec:recovery}

In standard LCU implementations, only the post‑selected outcome \(\ket{0\cdots0}\) is kept; all other shots are discarded.
In our scheme, by contrast, \emph{every} computational‑basis measurement of the ancilla—index and rotation—projects the system onto a distinct linear combination of the unitaries, yielding an unnormalized state \(\ket{\phi_{i,r}}\) from Eqs.~\eqref{eq:phi_i0}--\eqref{eq:phi_i1}.
Collecting all these outcomes and reshaping them into a \(2K\times N\) matrix \(\Phi\) reveals the algebraic structure which we can exploit. 

\subsection{Factorization and rank bound}

After running the circuit of Fig.\ref{fig:circuit} and measuring the index, rotation, and system qubits, one obtains samples from the $2K$ unnormalized states $\ket{\phi_{i,r}}$ of Eqs.~\eqref{eq:phi_i0}--\eqref{eq:phi_i1}.
For $r\in\{0,1\}$, $i\in\{1,\dots,K\}$, and $k\in\{1,\dots,N\}$; collecting these states as row vectors yields the full output matrix
\begin{equation}
\Phi_{(r,i),\,k} = \langle k | \phi_{i,r}\rangle.
\label{eq:Phi_def}
\end{equation}
From this relation and using the algebraic structure of the circuit, we can write the following factorization:
\begin{equation}
\Phi = C \, X,
\label{eq:factorization}
\end{equation}
where $C$ is the $2K\times K$ coefficient matrix
\begin{equation}
C_{(0,i),t} = \frac{1}{K}\, s_{i,t}\, w_t, \qquad
C_{(1,i),t} = \frac{1}{K}\, s_{i,t}\, r_t,
\label{eq:Cdef}
\end{equation}
with $s_{i,t} = \sqrt{K} H_{it} \in \{+1,-1\}$, and $X$ is the $K\times N$ matrix whose $t$‑th row is the state $\langle k|U_t|\psi\rangle$.

Because $C$ has full column rank $K$, Eq.~\eqref{eq:factorization} immediately implies
\begin{equation}
\operatorname{rank}(\Phi) \le \min(K,\operatorname{rank}(X)) \le K.
\label{eq:rank_bound}
\end{equation}
Hence $\Phi$ is a low‑rank matrix of size $2K\times N$ with rank at most $K$, a direct consequence of the circuit generating only $K$ independent linear combinations of the unitaries.

\subsection{Measurement challenge}

A practical quantum measurement in the computational basis estimates the probability $p_{r,i,k} = |\Phi_{(r,i),k}|^2$, i.e., the squared magnitude.
To recover the full complex matrix $\Phi$ one needs not only magnitudes but also phases.
Several strategies can be pursued:
\begin{itemize}
\item \textbf{Multi‑basis measurements.} Performing additional measurements in, e.g., the Pauli $X$ and $Y$ bases allows full quantum state tomography on the $2K$ unnormalized states~\cite{cramer2010efficient,gross2010quantum}. This recovers the complex amplitudes directly, but the shot cost grows with the number of bases.
\item \textbf{Real‑amplitude restriction.} If the coefficients $w_t$ and the unitaries $U_t$ are chosen such that all amplitudes become real and non‑negative (e.g., using the cyclic rotation variant and permutation matrices), the square root of the observed frequency directly gives the amplitude up to known signs from $C$. The low‑rank recovery then operates on a real matrix, greatly simplifying the task.
\item \textbf{Real‑imaginary decomposition.} For general complex amplitudes one can write $\Phi = \Phi_{\rm re} + i\,\Phi_{\rm im}$. Both $\Phi_{\rm re}$ and $\Phi_{\rm im}$ are real matrices of rank $\le K$ (since $C$ is real and $X$ splits into real and imaginary parts). One can apply matrix completion algorithms to $\Phi_{\rm re}$ and $\Phi_{\rm im}$ separately, each using only the squared‑magnitude constraints $|\Phi_{(r,i),k}|^2 = (\Phi_{\rm re})_{(r,i),k}^2 + (\Phi_{\rm im})_{(r,i),k}^2$ as side information. This opens the door to phase‑retrieval‑style reconstruction while exploiting the low‑rank structure.
\end{itemize}
In what follows we present recovery algorithms that work on the complex amplitude matrix (or its real/imaginary parts) assuming that enough entries (or their projections) have been estimated through one of the above routes.

\subsection{Recovery algorithms}

Given a subset of observed entries $\Omega \subset \{1,\dots,2K\}\times\{1,\dots,N\}$ and noisy estimates $\Phi^{\rm obs}_{ij}$ for $(i,j)\in\Omega$, the task is to recover a rank‑$K$ matrix that best fits the data.
We outline two complementary families of algorithms \cite{nguyen2019low,davenport2016overview}.

\paragraph{Generic low‑rank matrix completion.}
These methods treat $\Phi$ as an arbitrary low‑rank matrix and do not assume knowledge of $C$.

\begin{itemize}
\item \textbf{Singular Value Projection (SVP) / Iterative Hard Thresholding (IHT)}~\cite{jain2010guaranteed}.
Starting from an initial guess (e.g., zero), one alternates a gradient descent step on the observed entries,
\[
\Phi^{(\ell+1/2)} = \Phi^{(\ell)} + \mu \, \Pi_\Omega(\Phi^{\rm obs} - \Phi^{(\ell)}),
\]
where $\Pi_\Omega$ keeps only the observed positions and $\mu$ is a step size, with a rank‑$K$ truncation via the singular value decomposition:
\[
\Phi^{(\ell+1)} = \mathcal{T}_K\bigl(\Phi^{(\ell+1/2)}\bigr) = \sum_{j=1}^K \sigma_j\, \mathbf{u}_j \mathbf{v}_j^\dagger.
\]
The method is simple and provably convergent under incoherence assumptions~\cite{candes2012exact}. Similarly, an alternate known as nuclear norm minimization~\cite{cai2010singular} can be also used to form the matrix with minimal nuclear norm (sum of singular values) consistent with the observed entries. In this case, the convex relaxation is solved by semidefinite programming or first‑order methods, and similarly gives theoretical guarantees when the observation pattern is incoherent.

\item \textbf{Alternating Least Squares (ALS)}~\cite{koren2009matrix}.
The matrix is parametrized as $A B^\dagger$ with $A\in\mathbb{C}^{2K\times K}$, $B\in\mathbb{C}^{N\times K}$, and the factors are optimized alternately by solving least‑squares problems on the observed entries. ALS often converges faster than SVP and is widely used in recommendation systems, though it requires careful initialization.
\end{itemize}

All these methods work well when $\Phi$ is incoherent, i.e., its singular vectors are not aligned with the canonical basis. However, the structured nature of our $\Phi$ (with a fixed, known $C$) renders it highly coherent, which may degrade performance unless the observation fraction is large.

\paragraph{Factorized recovery using the known $C$.}
When the coefficient matrix $C$ is known (which is the case once the circuit parameters $\alpha_t$ are fixed), the factorization~(\ref{eq:factorization}) can be used to define the low‑rank constraint directly.
One then recovers the smaller matrix $X\in\mathbb{C}^{K\times N}$ by solving
\begin{equation}
\min_{X} \sum_{(i,j)\in\Omega} \bigl| (C X)_{ij} - \Phi^{\rm obs}_{ij} \bigr|^2,
\label{eq:factorized_ls}
\end{equation}
which is a simple linear least‑squares problem.
Because $X$ has only $K\times N$ unknowns (compared to $2K\times N$ in $\Phi$), the number of required observations per column drops to $K$; with random observation masks, the problem becomes well‑posed as soon as each column of $\Phi$ features at least $K$ observed rows.
The solution can be obtained either column‑wise via regularized normal equations,
\[
(C_{\text{obs}}^\dagger C_{\text{obs}} + \lambda I) \, X_{\cdot,j} = C_{\text{obs}}^\dagger \, \Phi^{\rm obs}_{\cdot,j},
\]
or jointly via gradient descent on $X$.
This factorized approach is exact (in the noiseless case) and achieves machine precision with far fewer entries than generic completion, and it fully bypasses the incoherence requirement.
Once $X$ is recovered, the target vector follows as
\begin{equation}
T\ket{\psi} = \sum_{t=1}^K \alpha_t \, \bigl(\text{row }t\text{ of }X\bigr).
\label{eq:extract_target}
\end{equation}

\subsection{Numerical illustration}

Fig.\ref{fig:recovery_plot} shows the relative error in reconstructing the full output matrix \(\Phi\) and the target row \(\varphi\) (the unnormalized \(T|\psi\rangle/K\)) as a function of the fraction of randomly observed entries, using the exact amplitudes computed via the matrix representation of Sec.~\ref{sec:circuit}.  
We compare the generic Singular Value Projection (SVP) completion~\cite{jain2010guaranteed} with the factorized recovery of Sec.~\ref{sec:recovery} that exploits the known \(C\) and solves Eq.~\eqref{eq:factorized_ls} column by column.

\paragraph{Observation requirements.}
The factorized model \(\Phi = C X\) has only \(KN\) unknown parameters (the entries of \(X\)), and each observed \(\Phi_{ij}\) gives one linear equation.  
For a given column \(j\), the submatrix \(C_{\text{obs}}\) from the observed rows has size \(m_j \times K\); the local least‑squares problem is fully determined as soon as \(m_j \ge K\).  
Hence,  the whole matrix can be recovered \emph{exactly} (in the noiseless case), if every column of $\Phi$ receives at least \(K\) observations (half of the column entries).  
With a uniformly random mask of density \(p\), the expected number of observations per column is \(2Kp\); the probability that a column has \(<K\) entries falls rapidly only when \(p\) is well above \(\frac{1}{2}\).  
Thus, although the factorized method needs only \(O(K N)\) observations in principle, a purely random mask requires a substantial fraction (\(\approx 70\%\)–\(90\%\)) to guarantee that all columns are sufficiently populated.  
This behavior is visible in Fig.\ref{fig:recovery_plot}: the factorized recovery error remains high until the observation fraction becomes large, at which point it drops abruptly to machine precision (below \(10^{-5}\)), outperforming SVP.  
SVP, on the other hand, exploits correlations *across* columns and can impute missing entries, producing a smoother decrease in error as more data are available, but it converges more slowly because the matrix is coherent (its left singular vectors are aligned with the rows of \(C\)).

\paragraph{Noise resilience.}
Fig.\ref{fig:recovery_noise} examines the scenario where additive complex Gaussian noise of standard deviation \(\sigma\) contaminates the observed entries, simulating the effect of finite shot statistics.  
Both methods use the same fixed observation fraction \(p=0.7\) (chosen so that most columns receive \(\ge K\) entries).  
SVP acts as a global low‑rank denoiser and suppresses noise more effectively, yielding a relative error approximately proportional to \(\sigma\).  
The factorized method solves independent least‑squares problems per column, and noise in a single column can directly perturb that column’s reconstruction; as a result, its error grows faster with \(\sigma\).  
Nevertheless, for small noise levels (\(\sigma \lesssim 10^{-3}\)) the factorized approach still recovers \(\Phi\) with error well below 1%.  
In practical quantum measurements, the noise per amplitude entry scales as \(1/\sqrt{S}\) with the number of shots \(S\); thus, moderate shot budgets are sufficient to reach the low‑error regime.

\begin{figure*}[t]
\centering
\begin{subfigure}[b]{0.48\textwidth}
\includegraphics[width=\linewidth]{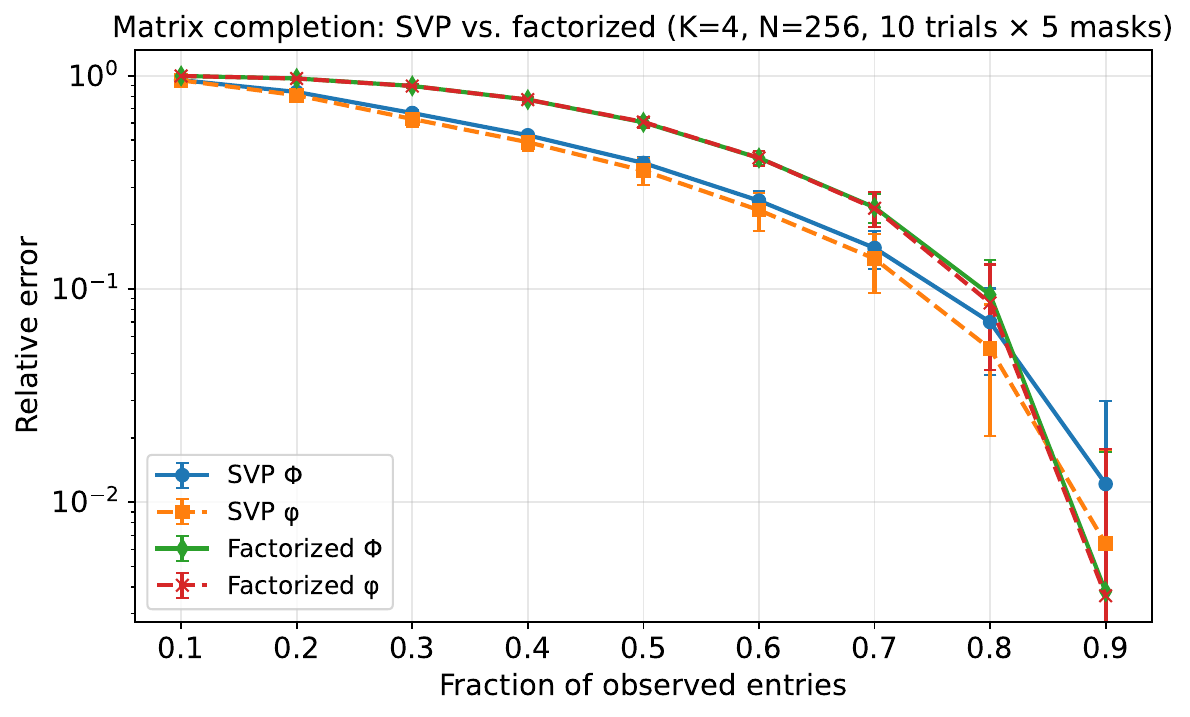}
\caption{\(N=256\), \(K=4\).}
\end{subfigure}\hfill
\begin{subfigure}[b]{0.48\textwidth}
\includegraphics[width=\linewidth]{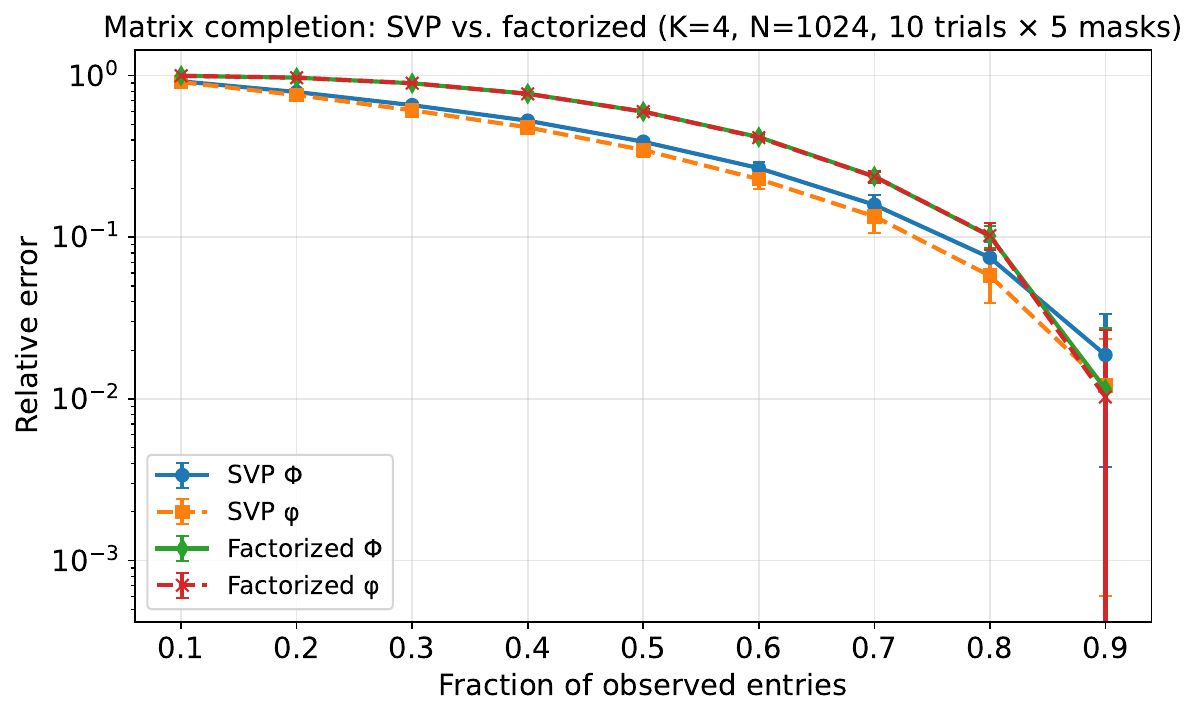}
\caption{\(N=1024\), \(K=4\).}
\end{subfigure}
\caption{Matrix completion of \(\Phi\) (solid) and \(\varphi\) (dashed) for exact amplitudes.  
SVP (blue) shows a gradual improvement, while factorized recovery (red) drops to machine precision once the mask density guarantees \(\ge K\) observed entries per column.  
Shaded bands: \(\pm1\) std.\ over \(10\) random instances and \(5\) masks.}
\label{fig:recovery_plot}
\end{figure*}

\begin{figure*}[t]
\centering
\begin{subfigure}[b]{0.48\textwidth}
\includegraphics[width=\linewidth]{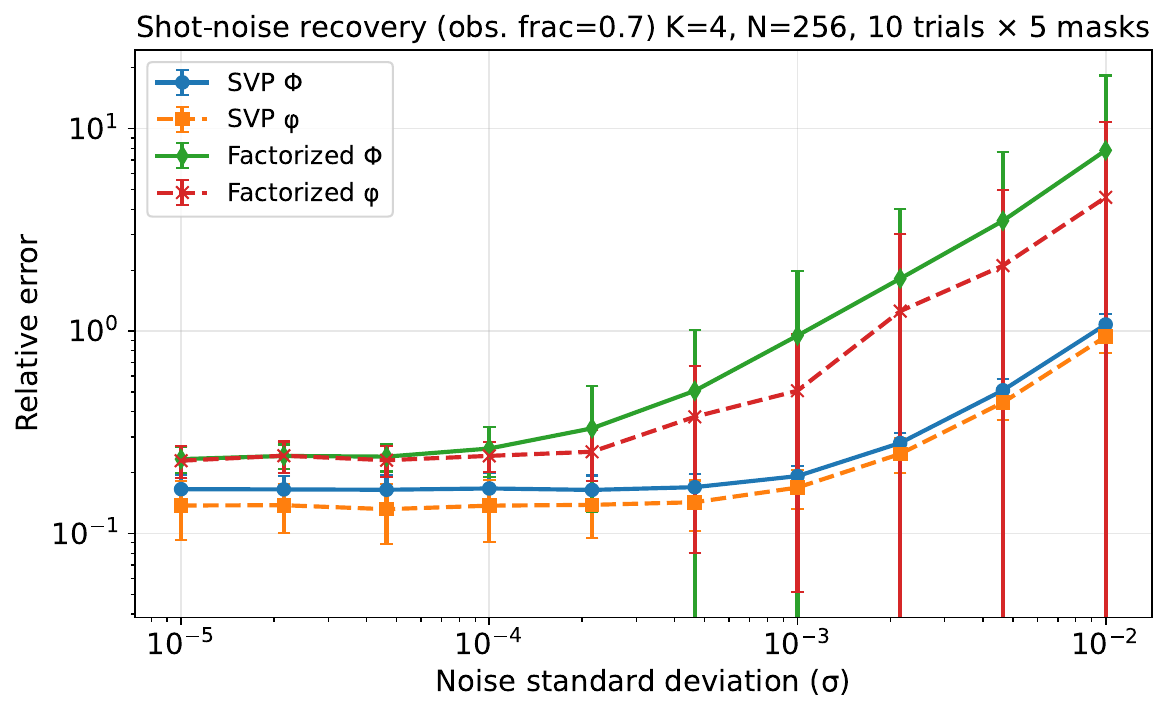}
\caption{\(N=256\), \(K=4\).}
\end{subfigure}\hfill
\begin{subfigure}[b]{0.48\textwidth}
\includegraphics[width=\linewidth]{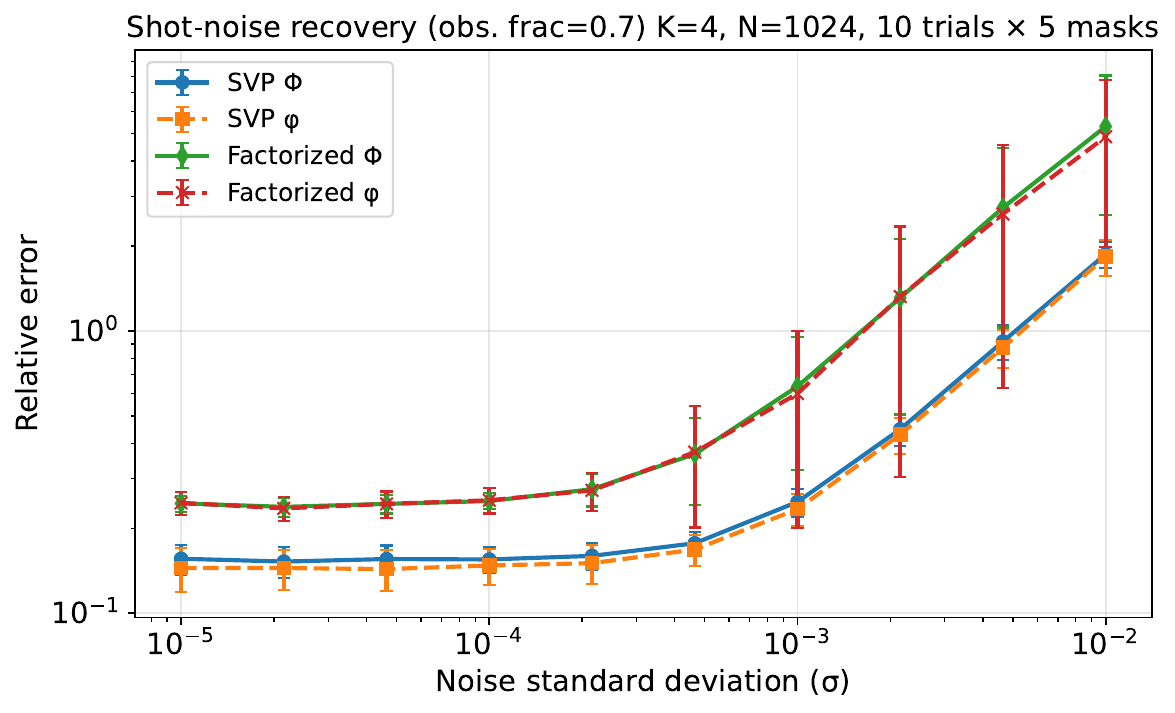}
\caption{\(N=1024\), \(K=4\).}
\end{subfigure}
\caption{Recovery under additive complex Gaussian noise with observation fraction fixed at 70\%.  
SVP (blue) denoises more effectively; factorized (red) is more noise‑sensitive but still accurate for \(\sigma\lesssim10^{-3}\).  
Shaded bands: \(\pm1\) std.\ over \(10\) instances and \(5\) noise realizations each.}
\label{fig:recovery_noise}
\end{figure*}

\section{Quantum trapdoor function from hidden coefficients}
\label{sec:trapdoor}

The factorization \(\Phi = C X\) derived in~\eqref{eq:factorization} not only enables shot‑efficient tomography of the \(2K\times N\) output matrix, but also suggests a cryptographic primitive: if the coefficients \(w_t\) are kept secret, the matrix \(C\) acts as a trapdoor that hides the input state \(\ket{\psi}\).

This idea connects to a growing body of work on quantum trapdoor functions~\cite{brakerski2021cryptographic,coladangelo2023quantum}, quantum information hiding through classical keys~\cite{gheorghiu2022quantum,broadbent2015quantum}, and quantum money~\cite{bozzio2018experimental}, as well as to obfuscation of quantum circuits~\cite{alagic2016quantum,bartusek2024quantum}.
Moreover, families of quantum states can be computationally indistinguishable from Haar‑random~\cite{ji2018pseudorandom,brakerski2019pseudo}, and the states produced by our circuit yield pseudorandom quantum states (when the unitaries are random) without the key.
The algebraic simplicity of this setting may therefore facilitate future security reductions based on the assumed hardness of non‑negative matrix factorization~\cite{vavasis2010complexity} or the pseudorandomness of quantum states generated by a structured circuit~\cite{jain2010guaranteed,brakerski2019pseudo}.

We can describe a candidate quantum trapdoor function based directly on the LCU circuit of Fig.~\ref{fig:circuit}. With the notation of Section~\ref{sec:recovery}, the full output matrix $\Phi$ of size $2K\times N$ satisfies
\begin{equation}
\Phi = C \, X,
\qquad 
C_{(0,i),t} = \frac{1}{K}\, s_{i,t}\, w_t,\quad
C_{(1,i),t} = \frac{1}{K}\, s_{i,t}\, r_t,
\label{eq:Cdef_trap}
\end{equation}
where $X_{t,k} = \langle k|U_t|\psi\rangle$.
The matrix $C$ is $2K\times K$ with orthogonal columns of norm $1/\sqrt{K}$, hence it has full column rank $K$.
When $C$ is known, the unique solution for $X$ is $X = K\,C^\top\Phi$, from which the target state $T|\psi\rangle$ (and, by inverting the unitaries $U_t$, the original $|\psi\rangle$) can be recovered.

Assume now that the unitaries $U_t$ and the circuit structure (Hadamard gates, measurement pattern) are public, but the $K$ real coefficients $w_t$ (equivalently the rotation angles) are private.
An adversary who obtains $\Phi$—or an estimate thereof—faces the equation $\Phi = \tilde{C} \tilde{X}$ without knowing the true $C$.
Because $C$ is not square, the decomposition is not unique: for any invertible $K\times K$ matrix $M$, the pair $(\tilde{C},\tilde{X}) = (C M,\, M^{-1} X)$ is equally valid.
Without additional information the adversary cannot distinguish the true $X$ from $M^{-1}X$, and thus cannot recover $|\psi\rangle$.

The adversary does, however, possess side information: the public circuit description fixes $C$ to be of the form~\eqref{eq:Cdef_trap}, with unknown weights $w_t$ and known Hadamard sign pattern $s_{i,t}$.
Thus the problem reduces to recovering the $K$ real parameters $w_1,\dots,w_K$ from the product $\Phi = C(w)\,X$. In classical factorization, The hardness of recovering an unknown \(C\) from \(C X\) is known to be NP-hard in general  as evidenced by results on dictionary learning~\cite{spielman2012exact,vavasis2010complexity}. However, here since we construct  $C$ from the coefficients and signs of the Hadamard matrix, this makes the factorization easier given that the mixing unitary is public and the adversary has full tomographic information.

Below we analyze the security of the proposed trapdoor, point out this weakness that appears if the mixing unitary is public and the adversary has full tomographic information, and then show how the primitive can be strengthened—first by exploiting the computational‑basis measurement that naturally hides the phase, and then by moving to the standard LCU construction in which the mixing unitary itself depends on the secret key.

\subsection{Limitations of this basic trapdoor when complex amplitudes available}
\label{sec:trapdoor-basic}
To make the analysis easier, we can summarize the trapdoor function as follows:
\begin{itemize}
\item \textbf{Public parameters:} the \(K\) unitaries \(U_t\), the circuit structure,
  and the fact that the mixing unitary is the Hadamard matrix \(H\).
\item \textbf{Secret key:} the real coefficients \(w_1,\dots,w_K\)
  (with \(r_t=\sqrt{1-w_t^2}\)), i.e. the rotation angles.
\item \textbf{Evaluation:} prepare an input state \(\ket{\psi}\) on the system register,
  run the circuit, and measure the index, rotation, and system qubits in the
  computational basis. Repeat for many shots to build the empirical estimates
  \(\{\hat{p}_{i,r,k}\}\) of the probabilities \(p_{i,r,k}=|\Phi_{(i,r),k}|^2\).
  The output is the classical dataset \(\{\hat{p}_{i,r,k}\}\).
\item \textbf{Inversion with key:} knowing \(C\) (from \(w_t\) and the public \(H\)),
  use low‑rank completion to recover the full complex matrix \(\Phi\), then
  compute \(X = K\,C^{\!\top}\Phi\) and extract \(T\ket{\psi}\).
\end{itemize}

\paragraph{An algebraic vulnerability – full complex information.}
If an adversary could obtain the exact complex amplitudes \(\Phi_{(i,r),k}\) (e.g.\ through
full state tomography on the ancilla‑system output), the trapdoor with a public
Hadamard mixing would be broken by a simple linear computation:
\begin{equation}
   \Phi_0 = \frac{1}{K} S\,\mathrm{diag}(\mathbf{w})\,X , \qquad
\Phi_1 = \frac{1}{K} S\,\mathrm{diag}(\mathbf{r})\,X , 
\end{equation}
where \(S_{i,t}=s_{i,t}=\sqrt{K}H_{it}\in\{\pm1\}\) is the (public) Hadamard‑sign matrix.
Multiplying by \(S^{\!\top}\) (which equals \(K\,S^{-1}\)) yields
\begin{equation}
S^{\!\top}\Phi_0 = \mathrm{diag}(\mathbf{w})\,X,\qquad
S^{\!\top}\Phi_1 = \mathrm{diag}(\mathbf{r})\,X .
\end{equation}
For any \(t\) with \(\mathbf{x}_t\neq 0\), the ratio of the two rows gives
\(r_t/w_t = \sqrt{1-w_t^2}/w_t\), from which \(w_t\) is uniquely recovered.
Thus, with full tomographic knowledge, the secret key is deterministically extracted
and the trapdoor fails.

\subsection{Hardness from phase‑retrieval: the realistic measurement model}
\label{sec:trapdoor-phase}

In the standard quantum computing framework and measurements, the evaluation procedure does not easily give the adversary the complex amplitudes \(\Phi_{ij}\). Instead, computational‑basis measurements reveal only the squared magnitudes \(p_{i,r,k}=|\Phi_{(i,r),k}|^2\). Therefore, the adversary’s input can be considered a noisy estimate of the probability vector \(\mathbf{p} = \{|\Phi_{ij}|^2\}\). Inverting the trapdoor without the key therefore becomes a structured phase‑retrieval problem \cite{candes2015phase,jagatap2017fast}: 
Given  \(\mathbf{p} = |C X|^2\) with  \(C\) known only to be of the  form  
\begin{equation}
    C = \frac{1}{K}\!\begin{pmatrix}S\,\mathrm{diag}(\mathbf{w})\\
  S\,\mathrm{diag}(\mathbf{r})\end{pmatrix},
\end{equation}
recover  \(\mathbf{w}\)  and \(X\) or directly \(T\ket{\psi}\). This phase retrieval, even in its unconstrained form, is a difficult non‑convex optimization \cite{candes2015phase,jagatap2017fast}. When the matrix is low‑rank (\(K\ll N\)) and its left factor is an unknown parametric dictionary, the problem becomes an instance of blind
dictionary learning under magnitude constraints \cite{tillmann2023extended}. No polynomial‑time algorithm is known that solves this problem from a random initialization, and heuristic gradient‑based methods are typically trapped in bad local minima.

Furthermore,  since different choices of the auxiliary degrees of freedom (for
instance, the relative phases of the rows of \(X\) that are not fixed by the
committed measurement) can produce the same squared magnitudes, this gives an
information‑theoretic ambiguity that strengthens the trapdoor even against an
unbounded adversary.

Thus, the basic Hadamard‑based circuit, when used with computational‑basis measurements, constitutes a credible candidate for a quantum trapdoor family. We should emphasize again that this security rests on the assumed hardness of low‑rank phase retrieval with a structured dictionary and using only computational basis measurements. We can strengthen the security by using a generic $W$ coefficient matrix as in the standard LCU.

\subsection{Strengthening construction by using generic coefficient matrix}
\label{sec:trapdoor-secret-mixing}

One may worry that the publicly known Hadamard mixing provides too much algebraic
structure, potentially enabling specialized attacks (e.g., exploiting the known
sign pattern to set up semidefinite relaxations). This concern can be remedied by using the standard coefficient matrix: i.e.  replacing the fixed Hadamard matrix \(H\) with a mixing unitary \(W\) that is itself generated from the secret key.

In this case, we let the secret key now consist of the coefficient vector \(\mathbf{w}\) and an
auxiliary random string \(\Gamma\) that selects a particular completion of \(W\).
And we require that the first row of \(W\) encodes the normalized
coefficients a in the standard LCU formalization:
\begin{equation}
    W_{0t} = \frac{1}{\sqrt{K}}\,w_t .
\end{equation}
The remaining rows can be obtained, for example, by applying a random (secret)
Householder transformation that maps the basis vector \(\ket{0}\) to the desired
first row and acts as a Haar‑random unitary on the orthogonal complement. The circuit is otherwise unchanged, and the coefficient matrix becomes
\begin{equation}
   C_{(0,i),t} = \frac{1}{\sqrt{K}}\,W_{it}\,w_t , \qquad
C_{(1,i),t} = \frac{1}{\sqrt{K}}\,W_{it}\,r_t . 
\end{equation}

Now the mixing matrix \(W\) itself is unknown to the adversary. Even if full
complex \(\Phi\) were available, extracting \(\mathbf{w}\) requires solving the
non‑linear system
\begin{equation}
    \Phi_0 = \frac{1}{\sqrt{K}}\,W\,\mathrm{diag}(\mathbf{w})\,X,\qquad
\Phi_1 = \frac{1}{\sqrt{K}}\,W\,\mathrm{diag}(\mathbf{r})\,X,
\end{equation}

with \(W\in\mathrm{U}(K)\) and \(W_{0t}=w_t/\sqrt{K}\). This is a structured
dictionary‑learning problem where the dictionary depends on the hidden weights.
Standard algebraic separation attacks fail because \(W\) is not invertible by a
fixed public matrix.

We can informally write this protocol as a theorem:
\begin{theorem}[One‑wayness of the secret‑mixing LCU trapdoor – informal]
\label{thm:trapdoor}
Let \(\{U_t\}_{t=1}^K\) be a set of public unitaries. Choose a secret key
\((\mathbf{w},\Gamma)\) with \(\sum w_t^2 = 1\) and \(r_t = \sqrt{1-w_t^2}\), and
let \(\Gamma\) specify a unitary \(W\in\mathrm{U}(K)\) whose first row is
\(W_{0t}=w_t/\sqrt{K}\). Construct the circuit of Fig.~\ref{fig:circuit} with the
Hadamard gate replaced by \(W\) (and its inverse by \(W^\dagger\)). Define the
evaluation function
\[
f_{\mathbf{sk}}(\ket{\psi}) = \bigl\{|\Phi_{ij}|^2\bigr\},
\]
which outputs the probability vector obtained by computational‑basis measurements
on the ancilla and system after one call to the circuit. Under the assumption that
\emph{no polynomial‑time quantum algorithm solves blind low‑rank phase retrieval
with a structured unitary dictionary}, the function family \(\{f_\mathbf{sk}\}\) is
one‑way: any polynomial‑time adversary given the public parameters and the
measurement statistics succeeds in recovering \(\mathbf{w}\) or
\(T\ket{\psi}\) with only negligible probability in \(K\) and \(n\).
\end{theorem}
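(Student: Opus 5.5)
The plan is a standard security reduction. Since the statement is conditional, one-wayness is not derived from first principles. Instead I show that any efficient inverter for $f_{\mathbf{sk}}$ yields an efficient solver for the problem assumed hard, namely blind low-rank phase retrieval with a structured unitary dictionary (BLRPR).

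First I fix BLRPR as an \emph{average-case} problem, so that the assumption matches the theorem's distributional claim. An instance is the vector $\mathbf{p}=|C X|^2$, entrywise, where
\[
C=\frac{1}{\sqrt{K}}\begin{pmatrix} W\,\mathrm{diag}(\mathbf{w})\\ W\,\mathrm{diag}(\mathbf{r})\end{pmatrix},
\]
$W$ is drawn from the distribution induced by $\Gamma$ (a secret Householder map to the prescribed first row, composed with a Haar-random unitary on the complement), and $X$ has rows $U_t\ket{\psi}$ for the public $U_t$. A solution is any of $\mathbf{w}$, $X$, or $T\ket{\psi}$, up to the unavoidable global phase.

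Before the reduction, a consistency issue must be repaired. The row condition $W_{0t}=w_t/\sqrt{K}$ with $\|W_{0\cdot}\|=1$ forces $\sum_t w_t^2=K$, which conflicts with $\sum_t w_t^2=1$. I would renormalize to $W_{0t}=w_t$ with $\sum_t w_t^2=1$, and state the theorem in that form.

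The reduction itself comes next. Given an adversary $\mathcal{A}$ that succeeds with non-negligible probability $\varepsilon$, the BLRPR solver $\mathcal{B}$ receives $\mathbf{p}$ together with the public data $\{U_t\}$. It forwards exactly these to $\mathcal{A}$ as if they were the measurement statistics of one evaluation of $f_{\mathbf{sk}}$. The key step is to check that this simulation is perfect. I would verify from Eq.~\eqref{eq:PsiOut}, with $H$ replaced by $W$, that the joint outcome distribution of index, rotation and system measurements is exactly $\{|\Phi_{ij}|^2\}$ with $\Phi=CX$ as above. This is the computation leading to Eq.~\eqref{eq:factorization}, repeated with $H_{it}$ replaced by $W_{it}$.

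Finally, the output of $\mathcal{A}$ is a BLRPR solution. If $\mathcal{A}$ returns $\mathbf{w}$, this is immediate. If it returns $T\ket{\psi}$, it is a solution by the definition of BLRPR adopted above. Hence $\mathcal{B}$ succeeds with probability $\varepsilon$, contradicting the assumption.

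I would supplement the reduction with the information-theoretic remark from Sec.~\ref{sec:trapdoor-phase}. The gauge $(C,X)\mapsto(CM,M^{-1}X)$, restricted to those $M$ that preserve the admissible form of $C$, together with the row-phase freedom $X\mapsto DX$ for diagonal unitary $D$ (which leaves $|CX|^2$ unchanged when $C$ is unknown), means that even an unbounded adversary faces a nontrivial set of consistent preimages. This explains why inversion, rather than mere distinguishing, is the relevant notion.

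The main obstacle is that the argument is only as strong as the definition of BLRPR. If that problem is defined too narrowly (only recovering $\mathbf{w}$), then an adversary that outputs only $T\ket{\psi}$ does not obviously yield a solution. If it is defined too broadly, the assumption becomes suspicious, since Sec.~\ref{sec:trapdoor-basic} shows the public-Hadamard variant is broken once phases leak.

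To address this, I would first try to show that recovering $T\ket{\psi}$ up to phase, combined with the public $U_t$, lets one reconstruct the phases of $\Phi$ through the known relation $\Phi_0=W\mathrm{diag}(\mathbf{w})X/\sqrt{K}$ and the magnitudes $\mathbf{p}$. That would reduce the $T\ket{\psi}$-recovery case to the $\mathbf{w}$-recovery case. If this fails, I would simply build both goals into the assumption and flag that choice explicitly, consistent with the informal status of the theorem.
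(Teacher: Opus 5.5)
Your argument rests on the same identification as the paper's sketch, namely that inverting $f_{\mathbf{sk}}$ \emph{is} a magnitude-only blind phase-retrieval problem, but you package it differently. The paper only recasts inversion as the constrained non-convex fit $\min \| |C(\tilde{\mathbf w},\tilde W)\tilde X|^2-\hat{\mathbf p}\|$ with $\tilde X_t=U_tU_1^{-1}\tilde X_1$. It notes that phase-retrieval heuristics need a warm start and explicitly defers any reduction. You instead give an identity reduction from an average-case problem whose instance distribution is, by definition, the evaluation distribution.

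That is sound, and it buys precision the paper lacks. The assumption must be average-case over the key distribution to yield negligible success; as phrased in the statement it reads as worst-case. The success notion must cover $T\ket{\psi}$. And the key specification must be repaired. Your normalization catch is sharper than you state: $r_t=\sqrt{1-w_t^2}$ needs $|w_t|\le 1$, so $\sum_t w_t^2=K$ forces $w_t=\pm 1$ and $r_t=0$, i.e.\ the key as written degenerates to a sign vector.

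The price is that the reduction is tautological. Neither your argument nor the paper's ties $f_{\mathbf{sk}}$ to an independently studied hard problem. Your wariness about broad definitions is justified even with secret $W$. In the model $\Phi_0=W\,\mathrm{diag}(\mathbf w)X/\sqrt{K}$, $\Phi_1=W\,\mathrm{diag}(\mathbf r)X/\sqrt{K}$, suppose all $r_t>0$ and $X$ has full row rank. Then $\Phi_0\Phi_1^{+}=W\,\mathrm{diag}(w_t/r_t)\,W^{-1}$, where ${}^{+}$ is the pseudoinverse, and its eigenvalues reveal $\mathbf w$ without knowledge of $W$. So any phase-revealing variant is easy, and only your magnitudes-only definition is viable.

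Two steps need repair. First, the perfect-simulation check fails as described. In Eq.~\eqref{eq:PsiOut} the prefactor $1/\sqrt{K}$ is the first-layer amplitude $H_{t0}$. With index layers $G_1$ then $G_2$, the branch coefficient is $(G_2)_{it}(G_1)_{t0}\langle r|R_t|0\rangle$. Replacing both Hadamards as the statement prescribes gives $\overline{W_{ti}}\,W_{t0}\langle r|R_t|0\rangle$. With the order reversed, so that the first-row condition enters, it gives $C_{(0,i),t}=W_{it}w_t^2/\sqrt{K}$. Neither equals $\frac{1}{\sqrt{K}}W_{it}\langle r|R_t|0\rangle$; that formula (yours and the paper's) corresponds to keeping the first Hadamard layer. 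Your instances and $\mathcal{A}$'s inputs therefore come from different distributions, and $\mathcal{A}$'s advantage does not transfer. Define the instance distribution with the coefficient matrix the circuit actually produces, and the identity reduction then goes through unchanged.

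Second, the row-phase freedom in your supplementary remark is not a symmetry. In general $|CDX|^2\neq|CX|^2$. Moving $D^{-1}$ into $C$ violates the first-row condition on $W$. And $DX$ violates $\tilde X_t=U_tU_1^{-1}\tilde X_1$ unless $D$ is a global phase. The remark is not needed for the reduction, so drop it.
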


\begin{proof}[Proof sketch]
The adversary receives only the squared magnitudes of the entries of
\(\Phi = C(\mathbf{w}, W) X\). No phase information is revealed. The task is
equivalent to the following optimization problem:
\begin{equation}
\operatorname{argmin}_{\tilde{\mathbf{w}},\tilde{X}}
\Bigl\|\, |C(\tilde{\mathbf{w}},\tilde{W})\,\tilde{X}|^2 - \hat{\mathbf{p}}\,\Bigr\|,
\end{equation}
subject to \(\tilde{X}_t = U_t U_1^{-1}\tilde{X}_1\) and to \(\tilde{W}\) being a
unitary matrix with first row \(\tilde{w}_t/\sqrt{K}\). This is a non‑convex
program with a search space that scales exponentially in \(K\). Known algorithms
for phase retrieval (see review articles \cite{candes2015phase,jagatap2017fast}) require an initial guess already close to the true solution and offer no guarantee when the underlying factors are themselves unknown. Formal reduction to the hardness of dictionary learning or
to the Unique‑Games hardness of phase retrieval can be carried out by
demonstrating that an efficient trapdoor inverter would solve one of these
presumed‑hard problems. We leave such a reduction for a dedicated cryptography
follow‑up.
\end{proof}

In summary, this provides two security properties:
\begin{itemize}
\item Information‑theoretic hiding. Even for a fixed \(\mathbf{w}\), there
  exists an entire \(\mathrm{U}(K\!-\!1)\) family of completions of \(W\). Two keys
  that differ only in this choice can produce distinct complex matrices \(\Phi\) but
  the same squared magnitudes for a well‑chosen input state, guaranteeing that the
  key is not uniquely determined from the measurement statistics.
\item Computational hardness. Recovering \(\mathbf{w}\) (and \(X\)) from
  \(|\Phi|^2\) is now a blind phase‑retrieval problem whose complexity is
  conservatively bounded below by that of non‑negative matrix factorization
  \cite{vavasis2010complexity} and dictionary learning
  \cite{spielman2012exact,arora2014new}. 
\end{itemize}

\section{Discussion and conclusion}
\label{sec:discussion}

\paragraph{Circulant variant.}
Replacing the Hadamard matrix $H$ by the discrete Fourier transform matrix $F$ (entries $F_{jt} = \omega^{jt}/\sqrt{K}$ with $\omega = e^{2\pi i/K}$) turns the blocks $A$ and $B$ into block‑circulant matrices.
In this case the index register can be diagonalized by a quantum Fourier transform, and each Fourier mode yields exactly one of the scaled unitaries $w_t U_t$ (up to index relabeling).
The eigenvalues of $A$ then become the eigenvalues of the $w_t U_t$ blocks, giving full spectral control while keeping the whole construction unitary.
This variant trades the simple $\pm1$ Hadamard signs for an analytically solvable spectrum, and we believe it may be useful when the eigenvalue distribution of the linear combination is important.

\paragraph{Explicit circuit synthesis.}
The cosine‑sine decomposition (CSD) of the shuffled unitary $U$ (Section~\ref{sec:CSD}) expresses it in terms of the singular values of $A$ and $B$, which are directly given by the coefficients $w_t$ and $r_t$. They can be used as input parameters for a CSD‑based circuit synthesis algorithm. Such an approach could produce a fully structured quantum circuit that implements the exact linear combination with minimal gate overhead.

\paragraph{Potential applications.}
The factorization $\Phi = C X$ is exactly the low‑rank model that underpins collaborative‑filtering recommendation systems~\cite{koren2009matrix}: $C$ plays the role of a “user‑feature” matrix and $X$ the “item‑feature” matrix.
Thus the LCU circuit of Fig.\ref{fig:circuit} can be viewed as a quantum embedding of a rank‑$K$ factor model, where the coefficients $\alpha_t$ encode latent user preferences and the unitaries $U_t$ generate item profiles.
In a recommendation scenario one might prepare a superposition over user types, run the circuit, and recover the full ratings matrix $\Phi$ from a small number of shots.
Similarly, PageRank and related link‑analysis algorithms~\cite{borodin2005link} iterate $T = \alpha P + (1-\alpha)\mathbf{1}\mathbf{v}^\top$, a two‑term LCU that fits naturally into the Hadamard‑mixing circuit; the output matrix $\Phi$ would then give the stationary distribution from every measurement branch.

The matrix arithmetic can be done by measuring the rotation qubit in the $|\pm\rangle$ basis extracts $A\pm B$, providing modified linear combinations $w_t\pm r_t$ that could simplify quantum circuits for matrix addition and subtraction. Although all these directions listed may seem speculative, however we believe they illustrate that the “junk” ancilla outcomes can be repurposed far beyond standard LCU.

On the cryptographic side, the involution property $V^2 \propto \bigoplus_t U_t^2$  suggests a symmetric quantum encryption scheme with identical encryption and decryption operations; if $U_t^2=I$, the circuit becomes an exact involution, enabling non‑destructive message authentication via ancilla measurements.
In addition, the trapdoor construction given in this paper may open several research directions. From a practical standpoint, one could instantiate the unitaries \(\{U_t\}\) with random
permutations or with elements of a Pauli subgroup, making the circuit
implementation lightweight. From a theoretical angle, one may attempt to prove
security in the quantum random‑oracle model or establish a connection with the
pseudorandomness of the output states \cite{ji2018pseudorandom}. We plan to
explore these aspects in a separate, cryptography‑focused paper.

\paragraph{Conclusion.}
In this paper we consider exploiting the information hidden in the full state output of LCU circuits. We have shown that a small change to the standard LCU circuit-Hadamard mixing combined with a single rotation qubit-exposes a clean algebraic structure in all ancilla outcomes.
By reshaping the measurement results into a $2K\times N$ matrix, one obtains the factorization $\Phi = C X$ with rank at most $K$.
This observation yields two immediate consequences: (i) the full matrix can be recovered from partial observations using low‑rank completion, turning every shot into useful information; (ii) the coefficient matrix $C$ can be kept secret, creating a candidate quantum trapdoor function and a symmetric encryption scheme. The same algebraic structure also links LCU to classical matrix factorization models and suggests connections to recommendation systems, ranking algorithms, and quantum linear algebra primitives.
We leave the formal security reductions, larger‑scale numerical benchmarks, and the integration with suggested algorithms as future work.

\section*{Acknowledgments}
The author acknowledges the use of DeepSeek AI for language refinement during the preparation of this manuscript.

\section*{Data Availability}
The simulation code is publicly available on GitHub at \url{https://github.com/adaskin/lcu-full-output-recovery}. All experimental results can be reproduced using the provided code. 

\section*{Conflict of Interest Statement}
Author declares no conflict of interest.
\section*{Funding}
This work is not supported by any funding agency.
\section*{Ethics Statement}
Not applicable.
\bibliographystyle{unsrt}
\bibliography{main}

\end{document}